\newtheorem{lemma}{Lemma}[section]
\newtheorem{corollary}[lemma]{Corollary}
\newtheorem{theorem}[lemma]{Theorem}
\newtheorem{fact}[lemma]{Fact}
\newtheorem{conjecture}[lemma]{Conjecture}
\newcommand{\mad}{\mathrm{mad}}
\newcommand{\ora}{\overrightarrow}
\begin{document}

\title{Decreasing the maximum average degree by deleting an independent set or a $d$-degenerate subgraph\thanks{This research is a part of projects that have received funding from the European Research Council (ERC)
under the European Union's Horizon 2020 research and innovation programme
Grant Agreement 714704 (W. Nadara) and 677651 (M. Smulewicz).}}
\date{\today}
\author{Wojciech Nadara\thanks{Institute of Informatics, University of Warsaw, Poland (\texttt{w.nadara@mimuw.edu.pl})} \ \
        Marcin Smulewicz\thanks{Institute of Informatics, University of Warsaw, Poland (\texttt{m.smulewicz@mimuw.edu.pl})}}
% \author{Bartosz Walczak}

\maketitle

\begin{textblock}{20}(0, 13.0)
\includegraphics[width=40px]{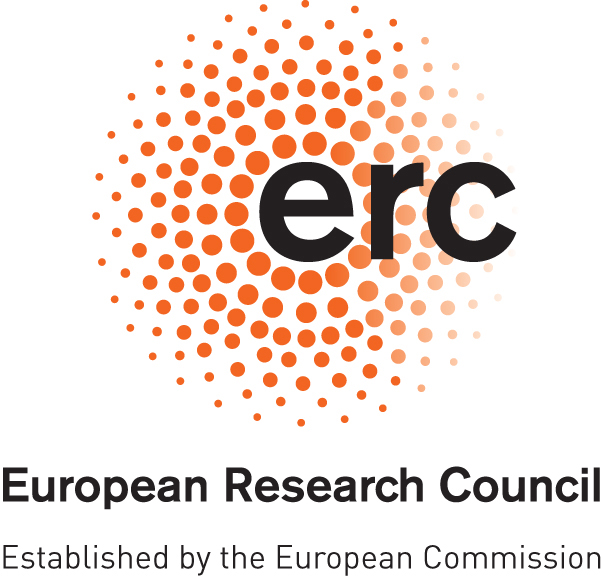}%
\end{textblock}
\begin{textblock}{20}(-0.25, 13.4)
\includegraphics[width=60px]{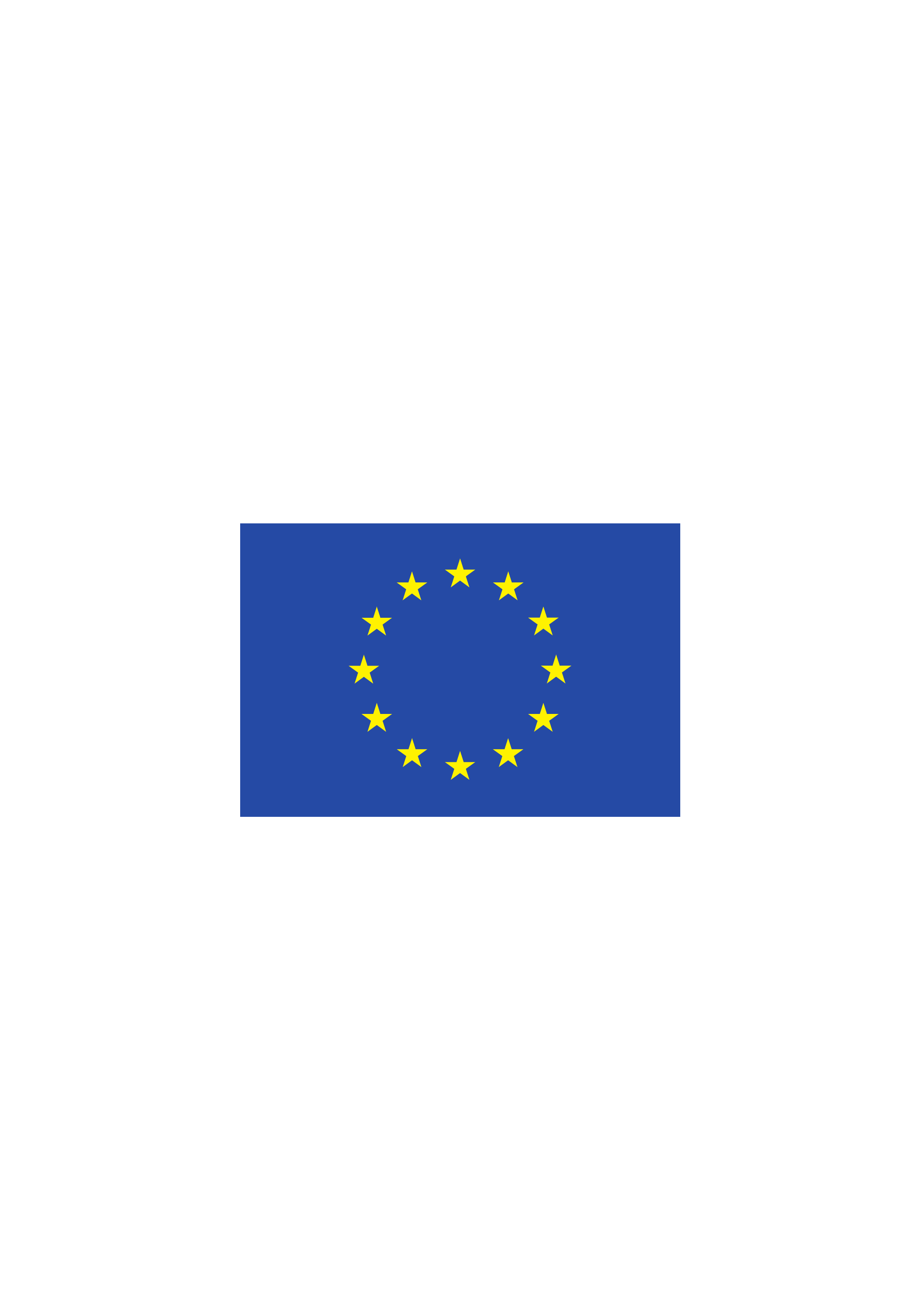}%
\end{textblock}

\begin{abstract}
The maximum average degree $\mad(G)$ of a graph $G$ 
is the maximum average degree over all subgraphs of $G$.
In this paper, we prove that for every $G$ and positive integer $k$
such that $\mad(G) \ge k$ there exists $S \subseteq V(G)$ such that
$\mad(G - S) \le \mad(G) - k$ and $G[S]$ is $(k-1)$-degenerate.
Moreover, such $S$ can be computed in polynomial time.
In particular, if $G$ contains at least one edge then there exists an independent set $I$ in $G$ such that
$\mad(G-I) \le \mad(G)-1$ and if $G$ contains a cycle then there exists an induced forest $F$ such that $\mad(G-F) \le \mad(G) - 2$.
% Moreover, we present algorithm to find such $I$
% and $J$
% in polynomial time.

% Let us take any real numbers $a,b$ and any graph $G$ such that $mad(G) < a + b$.
% In this paper we prove that there is partiotion of $G$ into two induced subgraphs $G_A$ and $G_B$
% such that $mad(G_A) < a$ and $mad(G_B) < b$.

\end{abstract}

\section{Introduction}
The maximum average degree (abbreviated as $\mad$) of a graph is a heavily studied notion.
 Multiple results show that a lower or upper bound on $\mad$ implies an existence of a particular partition of vertices of $G$, e.g., 
\cite{Borodin2011, Chen, Dross, Kopreski}.
Another class of results consider edge partitions, including \cite{Planar10, Planar9, Planar11}.
In these directions, particular attention has been paid to planar graphs, where, due to the inequality
$(\mad(G)-2)(g(G)-2) < 4$, an upper bound on $\mad$ can be inferred from a lower bound on the girth.

A graph parameter $f(G)$ will be called \emph{partitionable} if for every undirected simple graph $G$ and positive real numbers $a$ and $b$ such that $f(G) < a+b$, the vertex set $V(G)$ can be partitioned into $A$ and $B$ so that
$f(G[A]) < a$ and $f(G[B]) < b$. It is quite simple to prove that degeneracy, maximum degree and minimum degree are all partitionable parameters. However it is an open problem whether $\mad$ is partitionable. Such result would agree exactly with or even improve many existing results, for example mentioned in~\cite{Dross, Kopreski}. We answer this question positively for cases $a=1$ and $a=2$. It is a consequence of a following theorem which is the main result of this paper.

\begin{theorem} \label{mainres}
 For every undirected simple graph $G$ and a positive integer $k$ such that $\mad(G) \geq k$
 there exists $S \subseteq V(G)$ such that $G[S]$ is $(k-1)$-degenerate and $\mad(G - S) \le \mad(G) - k$.
 Moreover such $S$ can be computed in polynomial time.
\end{theorem}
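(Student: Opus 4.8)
The plan is to work with two reformulations that turn the statement into a purely quantitative one. First, $\mad(G-S)\le \mad(G)-k$ means exactly that $2e_G(W)\le(\mad(G)-k)|W|$ for every $W\subseteq V(G)\setminus S$, where $e_G(W)=|E(G[W])|$. Second, I will use the ordering characterisation of degeneracy: $G[S]$ is $(k-1)$-degenerate if and only if $S$ admits an ordering in which each vertex has at most $k-1$ neighbours preceding it (equivalently, $G[S]$ has an acyclic orientation of maximum out-degree at most $k-1$). Writing $D=\mad(G)$, the canonical tight instance to keep in mind is $G=K_n$ with $D=n-1$: there the only valid $S$ is a $k$-clique, which is the largest clique that is still $(k-1)$-degenerate, and $K_n-S=K_{n-k}$ meets the bound with equality. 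This already shows the phenomenon I must respect: the deleted set should bite into the dense part of $G$, not avoid it.

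The construction I would attempt is a greedy peeling guided by densest subgraphs. Maintain a set $S$ together with a degeneracy ordering certifying that $G[S]$ is $(k-1)$-degenerate, starting from $S=\emptyset$. While $\mad(G-S)>D-k$, let $W$ be an inclusion-minimal densest subgraph of $G-S$; its density $\rho_W=e_G(W)/|W|$ exceeds $(D-k)/2$, and by minimality every vertex of $W$ has degree strictly greater than $\rho_W$ inside $G[W]$. I would then move a carefully chosen vertex $w\in W$ to the top of the current ordering of $S$; for this to preserve $(k-1)$-degeneracy I need $w$ to have at most $k-1$ neighbours in $S$. The reason to take $W$ from $G-S$ (rather than a maximal independent set or any fixed dense set) is exactly an example like $K_4$ with a pendant attached to each vertex: a maximal independent set can sit entirely on the pendants and leave the dense core untouched, whereas peeling from a densest subgraph attacks the core, and the high minimum degree of $W$ guarantees that each vertex removed from $W$ strictly decreases the leftover density.

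The hard part, and the crux of the whole argument, is showing the process never jams: it is conceivable that every vertex of the dense witness $W$ already has $k$ or more neighbours in $S$, in which case no vertex of $W$ can be appended while keeping $G[S]$ $(k-1)$-degenerate. A direct count of $e_G(W\cup S)$ against the global bound $2e_G(W\cup S)\le D|W\cup S|$ does \emph{not} rule this out when $S$ is sparse, so one cannot simply invoke maximality. I expect to resolve this in one of two ways: either maintain an invariant forcing $S$ to stay dense enough (so that a vertex of $W$ with few neighbours in $S$ always exists), or allow exchanges that swap some $s\in S$ for $w\in W$ and control the new density violations that the returned vertex $s$ may create, via a global potential such as a suitably weighted sum of density excesses over $V\setminus S$. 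This is also precisely the step that makes the theorem genuinely stronger than iterating the $k=1$ case: deleting $k$ successive independent sets reduces $\mad$ by $k$ but only yields a $k$-colourable $S$, and $k$-colourable graphs need not be $(k-1)$-degenerate, so the single-set degeneracy bound must be engineered directly.

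Finally, correctness and the polynomial running time would follow from the same construction. At termination the loop condition gives $\mad(G-S)\le D-k$, while $G[S]$ is $(k-1)$-degenerate by the maintained ordering. Each iteration requires one densest-subgraph (equivalently, maximum-flow or parametric-flow) computation, which is polynomial, and a monotone potential---e.g.\ $|V\setminus S|$ together with the total number of edges of $G-S$, or the maximum density excess---must be shown to strictly decrease, so that only polynomially many iterations occur. If a fully constructive exchange argument proved too delicate, a fallback is a minimal-counterexample analysis in the discharging style standard for $\mad$ problems, extracting reducible configurations from the assumed nonexistence of $S$; but because the theorem explicitly asks for a polynomial-time algorithm, I would push the densest-subgraph peeling through.
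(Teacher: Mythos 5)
Your plan correctly reformulates the two conditions and correctly locates the crux, but it does not prove the theorem: the step on which everything depends --- that the peeling ``never jams'', i.e.\ that the dense witness $W$ always contains a vertex with at most $k-1$ neighbours in the current $S$ --- is explicitly left open (``I expect to resolve this in one of two ways\dots''), and neither of the two suggested repairs is carried out. The termination/potential argument is likewise only sketched. As written, the proposal is a restatement of what must be proved together with an unproven key claim, so it has a genuine gap rather than a fixable slip; a pure density count on $W$ and $S$ really does not yield the needed vertex, as you yourself observe.

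The paper closes exactly this gap by working with a max-flow certificate of $\mad(G)$ instead of densest subgraphs. In the network $F(G,\mad(G)/2)$ a saturating flow splits each edge fractionally between its two endpoints so that every vertex receives at most $\mad(G)/2$; orienting each edge towards the endpoint that receives strictly more than $\tfrac12$ (discarding ties) and cancelling cycles gives an acyclic digraph $G_f$. The algorithm then repeatedly moves a \emph{source} of the surviving part of $G_f$ into $S$ and permanently discards any vertex that has acquired $k$ neighbours in $S$. Degeneracy of $G[S]$ is immediate, since a vertex with $k$ neighbours in $S$ can never later be added. The source property is precisely the substitute for your missing lemma: when $x_i$ is taken into $S$, any still-surviving neighbour $u$ has no edge $\ora{ux_i}$ in $G_f$, so at least $\tfrac12$ a unit of the flow on the edge $ux_i$ is charged to $u$; hence each of the $k$ neighbours in $S$ of a discarded vertex $u$ accounts for at least $\tfrac12$ of $u$'s incoming flow, and after deleting $S$ the restricted flow certifies $\mad(G-S)\le\mad(G)-k$ in the network with vertex capacities lowered by $k/2$. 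If you want to push your densest-subgraph peeling through, you would need to maintain some comparable fractional-orientation (or vertex-load) invariant; the density of $W$ alone does not provide one.
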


Up to our knowledge this is the first theorem of a kind where we are given a graph with
bounded value of its $\mad$ where we partition its vertex set into some
parts so that their values of $\mad$ are smaller, however they need not be bounded 
by absolute constant. This is opposed to all results where every resulting part
induces a forest or is an independent set or has maximum degree $1$, etc.

Our results can be applied as a tool for directly deriving many results for some specific
sparse graph classes, for example planar graphs with constraints on girth.
It seems that our results do not show as much expressive power as it is possible
to get on such restrictive graph classes
(where arguments specifically adjusted to the researched restricted graph classes can be used),
which is a price for deriving them from a more general theorem.
However, our results can be seen as a nice way of unifying
these results and there are cases where using our results improves the state of the art.

Our results imply a positive answer for the
open problem presented in \cite{EDU} (problem 2 from final remarks)
what in turn implies sub-exponential bound on the diameter of reconfiguration graphs of
colourings for graphs with any bounded maximum average degree.

The rest of the paper is organized as follows.
In Section \ref{Prelims}, we introduce a few useful notions.
In Section \ref{main}, we state our main result and provide its proof.
In Section \ref{conclusions}, we present some direct consequences of our results and conclude this paper.

\section{Preliminaries} \label{Prelims}

Theorems proved in this paper will be about simple undirected graphs, however
multiple directed graphs will show up throughout the proofs.

Undirected edge between vertices $u$ and $v$ will be denoted as $uv$.
Directed edge from $u$ to $v$ will be denoted as $\ora{uv}$.
%we say that $u$ is a \textit{tail} of this edge, $v$ is a \textit{head} of this edge and
%we denote it as $\ora{uv}$.

If $G$ is a graph and $A$ is a subset of its vertices then by $G[A]$ we denote subgraph
of $G$ induced on vertices of $A$. Length of the shortest cycle in a graph $G$ is called girth and will be denoted as $g(G)$. If $G$ is a forest we set that $g(G) = \infty$.
$\Delta(G)$ denotes maximum degree of a vertex.

%If it is clear from the context which graph are we talking about then $n$ will denote number of its vertices and $m$ will denote number of its edges.

The maximum average degree of a given graph $G$ is defined as follows:
$$\mad(G) := \max_{H \subseteq G, H \neq \emptyset} \frac{2|E(H)|}{|V(H)|}.$$
We assume that $\mad$ of a graph with an empty vertex set is $-\infty$.

We say that undirected graph $G$ is $k$-degenerate if each of its subgraphs contains a vertex of degree at most~$k$.
Degeneracy of a graph is the smallest value of $k$ such that this graph is $k$-degenerate.

Let us note that class of $0$-degenerate graphs is exactly the same class of graphs as graphs with ${\mad(G) < 1}$, because both are just edgeless graphs. Moreover class of $1$-degenerate graphs is
exactly the same class of graphs as graphs with $\mad(G) < 2$, because both are just forests.

\section{Proof of Theorem \ref{mainres}} \label{main}

In order to prove Theorem \ref{mainres} we are going to investigate flow network
that allows to determine value of $\mad$ in polynomial time.
Example of such network can be found in \cite{FlowNetwork}, however we are going
to use one adjusted to our own use.

\begin{figure}[b]
\centering
\begin{minipage}{.2\textwidth}
\begin{tikzpicture}[->,>=stealth',shorten >=1pt,auto,node distance=1.6cm,
        thick,main node/.style={circle,draw,minimum size=0.6cm,inner sep=0pt]}]

  \node[main node] (1) {$x$};
  \node[main node] (2) [right of=1] {$z$};
  \node[main node] (3) [below right of=1] {$y$};
  \node[main node] (4) [below of=3] {$u$};  
  \path[-]
  (1) edge node {} (2)
  (1) edge node {} (3)
  (2) edge node {} (3)
  (3) edge node {} (4);
\end{tikzpicture}

\end{minipage}
\begin{minipage}{.35\textwidth}

\begin{tikzpicture}[->,>=stealth',shorten >=1pt,auto,node distance=1.6cm,
        thick,main node/.style={circle,draw,minimum size=0.6cm,inner sep=0pt]}]

  \node[main node] (1) {$s$};
  \node[main node] (3) [below of=1] {$v_{xz}$};
  \node[main node] (2) [left of=3] {$v_{xy}$};
  \node[main node] (4) [right of=3] {$v_{yz}$};
  \node[main node] (5) [right of=4] {$v_{yu}$};
  
  \node[main node] (6) [below of=2]  {$x$};
  \node[main node] (7) [below of=3]  {$y$};
  \node[main node] (8) [below of=4]  {$z$};
  \node[main node] (9) [below of=5]  {$u$};
  
  \node[main node] (10) [below of=7]  {$t$};

  \path[->]
  (1) edge [left] node {$1$} (2)
  (1) edge [left] node {$1$} (3)
  (1) edge [right] node {$1$} (4)
  (1) edge [right] node {$1$} (5)
  (2) edge [left] node {$\infty$} (6)
  (2) edge [above, near start] node {$\infty$} (7)

  (3) edge [above, near start] node {$\infty$} (6)
  (3) edge [above, near start] node {$\infty$} (8)

  (4) edge [above, near start] node {$\infty$} (7)
  (4) edge [near end] node {$\infty$} (8)
  
  (5) edge [above, near start] node {$\infty$} (7)
  (5) edge [right] node {$\infty$} (9)

  (6) edge [left] node {$c$} (10)
  (7) edge [left] node {$c$} (10)
  (8) edge [right] node {$c$} (10)
  (9) edge [right] node {$c$} (10);
\end{tikzpicture}

\end{minipage}

\centering
\caption{Example of graph $G$ and flow network $F(G,c)$ corresponding to it.}
\label{network}
\end{figure}
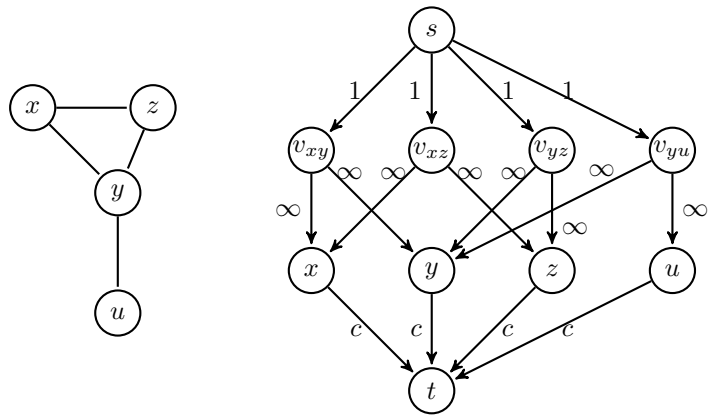

Let us define a flow network $F(G,c)$ for given undirected graph $G$
and any nonnegative real number $c$.
The network will consist of 
one node for each $v \in V(G)$,
one node for each $e \in E(G)$ denoted as $v_e$
and two special nodes $s$ and $t$, respectively source and sink.
There will be three layers of directed edges in $F(G,c)$:
\begin{itemize}
\item The first layer -- Edges of capacity one from $s$ to each node $v_e$.
\item The second layer -- Edges of infinite capacity from each $v_e$ where $e=uw \in E(G)$ to $u$ and to $w$.
\item The third layer -- Edges of capacity $c$ from each $v \in V(G)$ to $t$.
\end{itemize}

\begin{lemma}\label{flow}
  For any graph $G$ and any real number $c$, maximum flow between $s$ and $t$ in $F(G,c)$ is equal to $|E(G)|$
  if and only if $2c \ge \mad(G)$.
\end{lemma}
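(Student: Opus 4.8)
The plan is to apply the max-flow min-cut theorem and to translate minimal $s$–$t$ cuts of $F(G,c)$ into a statement about edge densities of subgraphs of $G$. First I would observe that the only edges leaving $s$ are the $|E(G)|$ unit-capacity edges of the first layer, so the value of every $s$–$t$ flow, and in particular the maximum flow, is at most $|E(G)|$. Consequently the equality ``max flow $=|E(G)|$'' is, by max-flow min-cut, equivalent to the statement that every $s$–$t$ cut has capacity at least $|E(G)|$, and it suffices to analyse the structure of finite cuts.

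Next I would parametrise the finite cuts by their vertex part. Because every second-layer edge has infinite capacity, any cut of finite capacity must place each edge-node $v_e$ on the sink side unless both endpoints of $e$ lie on the source side. So let $A \subseteq V(G)$ be the set of original vertices placed on the source side. For a fixed $A$, the cheapest consistent placement of the edge-nodes puts $v_e$ (with $e=uw$) on the source side exactly when $u,w \in A$: this costs $0$, whereas placing it on the sink side would cut the capacity-one edge $s \to v_e$, and placing it on the source side when some endpoint is outside $A$ would cut an infinite edge. Hence the resulting cut has capacity $\big(|E(G)| - |E(G[A])|\big)$, coming from the first-layer edges into the edge-nodes left on the sink side, plus $c|A|$, coming from the third-layer edges leaving the vertices of $A$. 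Therefore the minimum cut equals $\min_{A \subseteq V(G)} \big( |E(G)| - |E(G[A])| + c|A| \big)$.

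Taking $A=\emptyset$ shows this minimum is at most $|E(G)|$, so ``max flow $=|E(G)|$'' holds if and only if $|E(G)| - |E(G[A])| + c|A| \ge |E(G)|$ for every $A$, that is, if and only if $c|A| \ge |E(G[A])|$ for every nonempty $A \subseteq V(G)$. Finally I would rewrite the worst case: for a fixed vertex set the densest subgraph is the induced one, so $\mad(G) = \max_{A \neq \emptyset} \frac{2|E(G[A])|}{|A|}$, and the inequality $c|A| \ge |E(G[A])|$ holding for all nonempty $A$ is precisely $c \ge \frac{1}{2}\mad(G)$, i.e. $2c \ge \mad(G)$, which is the claimed equivalence.

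I expect the main obstacle to be the bookkeeping in the cut analysis: one must argue cleanly that restricting attention to finite cuts loses nothing, that each finite cut is faithfully described by its vertex part $A$ together with the optimal placement of the edge-nodes, and that the expression $\min_{A}\big(|E(G)| - |E(G[A])| + c|A|\big)$ genuinely captures the minimum over all cuts rather than merely an upper bound. Once this formula is justified, the passage to $2c \ge \mad(G)$ is immediate.
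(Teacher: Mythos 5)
Your proposal is correct and follows essentially the same route as the paper: both apply max-flow min-cut, observe that infinite-capacity second-layer edges are never cut, parametrise the remaining cuts by a vertex subset $A$ to obtain the cut value $|E(G)| - |E(G[A])| + c|A|$, and reduce the resulting inequality to $2c \ge \mad(G)$. Your write-up is, if anything, slightly more careful in justifying that every finite cut is captured by this parametrisation and that the densest subgraph on a fixed vertex set is the induced one.
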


\begin{proof}
By max-flow min-cut theorem we know that maximum flow in a graph $G$
is equal to the minimum cut, so we are going to investigate structure of $s-t$ cuts
in this graph. We refer to cuts as sets of edges. Since edges in second layer have infinite capacity they surely do not belong to
any minimum cut. If no edge from third layer belongs to the cut then all edges
from first layer must belong to it and this is a cut of weight $|E(G)|$, so 
if maximum flow is smaller than $|E(G)|$ then there exists a minimum cut with some edges
in third layer. Let us fix some minimal cut $C \subseteq E(F(G, c))$ and
let us assume that $ W \subseteq V(G)$ is a nonempty subset of all vertices of $V(G)$
such that edges $\ora{wt}$ for all $w \in W$ belong to $C$. Let $H = G[W]$.
If $e \not\in E(H)$ then $\ora{sv_e}$ has to belong to $C$.
Observe that all mentioned edges, that is $\ora{wt}$ for $w \in W$ and $\ora{sv_e}$ for $e \not\in E(H)$ already form an $s-t$ cut.
Its weight is $c|V(H)| + |E(G)| - |E(H)|$. If this value
is less than $|E(G)|$ then we know that maximum flow in this graph is less than $|E(G)|$.
However if for any $H$ this value is not smaller than $|E(G)|$ then we know that
maxflow in this graph is $|E(G)|$.

We get that maxflow in this graph is smaller than $|E(G)|$ if and only if
there exists $H \subseteq G$ such that 
$c|V(H)| + |E(G)| - |E(H)| < |E(G)| \Leftrightarrow c|V(H)| < |E(H)| \Leftrightarrow c < \frac{|E(H)|}{|V(H)|}$.
Maximum value of $\frac{|E(H)|}{|V(H)|}$ equals $\frac{\mad(G)}{2}$, so we get that
maxflow in $F(G, c)$ is equal to $|E(G)|$ is and only if $c \ge \frac{\mad(G)}{2}$, as desired. 

\end{proof}

Let us note that by using Lemma \ref{flow}, observing that $\mad(G) = \frac{a}{b}$ for some
$a, b \in \mathbb{Z}$ and $a \le n^2, b \le n$ and knowing that we can compute maximum flow
in polynomial time, we can conclude
that $\mad(G)$ can be computed in polynomial time.

Let us fix any graph $G$ and denote $F := F(G,\frac{\mad(G)}{2})$.
Let us define a directed graph $G_f$ for a given $s-t$ flow $f$ in $F$ of capacity $|E(G)|$ by directing some of edges from $G$ and discarding the rest.
Flow $f$ routes one unit of flow through each $v_{uw}$.
Node $v_{uw}$ has two outgoing edges to $u$ and to $w$.
If $f$ sends more than $\frac{1}{2}$ unit of flow to $w$ then in $G_f$ we put
directed edge $\ora{uw}$,
similarly if $f$ sends more than $\frac{1}{2}$ unit of flow to $u$ we put
edge $\ora{wu}$. Otherwise if $f$ sends exactly $\frac{1}{2}$ unit to both $u$ and $w$
we simply discard this edge.

\begin{lemma}\label{acyclic}
  There exists flow $f$ of capacity $|E(G)|$ in $F$ such that $G_f$ is acyclic. Moreover,
  it can be determined in polynomial time.
\end{lemma}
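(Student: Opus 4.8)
The plan is to start from any maximum flow and then repeatedly cancel directed cycles of $G_f$ until none remain. By Lemma~\ref{flow}, since $2c = \mad(G) \ge \mad(G)$, the maximum flow in $F$ has value $|E(G)|$; I fix such a flow $f$. For an edge $uw \in E(G)$ I write $a_{uw}$ for the amount that $f$ routes from $v_{uw}$ to $w$, so that $1 - a_{uw}$ is routed to $u$; the edge appears in $G_f$ oriented towards the endpoint receiving strictly more than $\tfrac12$, and is discarded exactly when $a_{uw} = \tfrac12$.

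Suppose $G_f$ contains a directed cycle $\ora{v_1 v_2}, \ora{v_2 v_3}, \dots, \ora{v_\ell v_1}$. I would modify $f$ by rotating flow around this cycle: for each cycle edge $v_i v_{i+1}$, decrease by $\epsilon$ the flow that $v_{v_i v_{i+1}}$ sends to $v_{i+1}$ and increase by $\epsilon$ the flow it sends to $v_i$ (indices taken mod $\ell$). The key observation is that this keeps $f$ a feasible flow of the same value: the first-layer edges are untouched, so each $v_{v_i v_{i+1}}$ still routes one unit onward, and every cycle vertex $v_i$ simultaneously loses $\epsilon$ on its incoming cycle edge $v_{i-1} v_i$ and gains $\epsilon$ on its outgoing cycle edge $v_i v_{i+1}$. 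Hence the total flow entering $v_i$, and thus the flow $v_i$ sends to $t$, is unchanged; no third-layer capacity is violated, and, crucially, the split $a_e$ of every edge $e$ outside the cycle is unaffected, so no edge outside the cycle changes its status in $G_f$.

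It then remains to choose $\epsilon$ so as to make progress. Each cycle edge satisfies $a_{v_i v_{i+1}} > \tfrac12$, so I would set $\epsilon = \min_i \bigl(a_{v_i v_{i+1}} - \tfrac12\bigr) > 0$. For this value at least one cycle edge becomes perfectly balanced and leaves $G_f$, while every other cycle edge still sends at least $\tfrac12$ in its original direction and therefore keeps its orientation or also leaves $G_f$; no edge is reoriented and no edge is created. Consequently the number of oriented edges of $G_f$ strictly decreases. Iterating, after at most $|E(G)|$ rotations I reach a flow $f$ with $G_f$ acyclic, which proves existence. (One may equivalently note that among all maximum flows, one minimizing $\sum_{uw} \bigl(a_{uw}^2 + (1-a_{uw})^2\bigr)$ must already be acyclic, since the rotation strictly decreases this quantity.)

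For the polynomial running time, I would first compute an \emph{integral} maximum flow after scaling all capacities by a common denominator of $c = \mad(G)/2$; this confines every $a_{uw}$ to a fixed rational lattice, so all shifts $\epsilon$ and all intermediate flow values keep polynomially bounded bit-size. A directed cycle of $G_f$ can be found by depth-first search, and each rotation is a local update, so the whole procedure runs in polynomial time. The main thing to get right is the feasibility-and-progress argument of the previous two paragraphs: that rotating by exactly $\min_i\bigl(a_{v_i v_{i+1}} - \tfrac12\bigr)$ preserves feasibility while strictly decreasing the bounded, monotone potential given by the number of oriented edges, since this is precisely what simultaneously guarantees termination and the polynomial bound.
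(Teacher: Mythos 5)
Your proof is correct and follows essentially the same route as the paper: both cancel a directed cycle of $G_f$ by shifting flow of size $\min_i\bigl(a_{v_iv_{i+1}}-\tfrac12\bigr)$ at the second-layer nodes, observe that this preserves feasibility and strictly decreases the number of oriented edges, and bound the number of iterations by $|E(G)|$ (the paper phrases existence via an edge-minimal flow rather than explicit iteration, but the algorithmic part is identical, including the scaling by a common denominator of $\mad(G)/2$ to control arithmetic).
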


\begin{proof}
  From Lemma \ref{flow} we know that there exists at least one flow $f$ between $s$ and $t$ of capacity $|E(G)|$.
  Let us take $f$ such that number of edges in $G_f$ is as small as possible.
  Suppose there is a cycle in $G_f$ on vertices $c_1, c_2, \ldots, c_k$ respectively.
  Denote $c_{k+1} := c_1$ as we are dealing with a cycle.
  Let $x$ be minimum of amounts of flow that $f$ sends through some edge $\ora{v_{u_iu_{i+1}}u_{i+1}}$ for some valid $i$.
  From definition of $G_f$ we deduce that $x > \frac{1}{2}$.
  Let us define $f'$ by decreasing flow $f$ on edges $\ora{v_{u_iu_{i+1}}u_{i+1}}$ 
  and increasing it on edges $\ora{v_{u_iu_{i+1}}u_{i}}$ by $x - \frac{1}{2}$.
  Amount of flow leaving and entering each vertex remains unchanged hence $f'$ still restricts capacities.
  Flow $f'$ for at least one vertex $v_{u_iu_{i+1}}$ sends exactly $\frac{1}{2}$ unit of flow through both edges outgoing from it,
  so at least one edge on the cycle is no longer present in $G_{f'}$
  and edges outside the cycle remain unchanged when compared to $G_f$.
  It contradicts the assumption that $G_f$ has the smallest possible number of edges
  what concludes proof of existence of such $f$.
  
  In~order to compute such $f$ in polynomial time let us take any $f$ of capacity
  $|E(G)|$ in $F$ (let us remind that we can determine value of $\mad(G)$ in the polynomial time
  which is used in construction of $F$). If $G_f$ contains some cycle we can detect this cycle,
  determine the value of $x$ and adjust values of units that flow sends through mentioned edges accordingly.
  Number of edges in $G_{f'}$ is strictly smaller than in $G_f$, so we will not do this
  more than $|E(G)|$ times, what gives us algorithm performing a polynomial number of operations.
  In~order to omit dealing with rational numbers we can multiply all capacities in $F$
  by $2b$, where $\mad(G) = \frac{a}{b}$ for some coprime integers $a, b$.
  That concludes description of polynomial algorithm determining the desired~$f$.

\end{proof}

Let us fix $f$ from the above lemma.
We will present an algorithm in which:
\begin{itemize}
\item the routine $NoInEdges(H_f)$ returns any vertex from directed acyclic graph $H_f$ which has no incoming edges
(as the graph is acyclic there always exists at least one such vertex)
\item the routine $KNeighborhood(H, S, k)$ given graph $H$, subset of its vertices $S$ and integer $k$
returns set of all vertices from $H$ outside of $S$ adjacent to at least $k$ vertices from $S$.
\end{itemize}

\begin{algorithm}
\begin{algorithmic}

\caption{} \label{alg}
\Function {Solve} {$H, H_f, k$}
\State $S \gets \emptyset$
\While {$H_f \neq \emptyset$}
  \State $x \gets NoInEdges(H_f)$
  \State $S \gets S \cup \{ x \}$
  \State $H_f \gets H_f - \{ x \} - KNeighborhood(H, S, k)$
\EndWhile
\EndFunction
\Return $S$

\end{algorithmic}
\end{algorithm}

\begin{theorem} \label{algo}
  Function $Solve(G, G_f, k)$ returns $(k - 1)$-degenerate set $S$ such that $\mad(G-S) \le {\mad(G) - k}$.
\end{theorem}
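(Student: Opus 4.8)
The plan is to verify the two claimed properties of the output $S$ separately, while tracking the state of the algorithm across the iterations of the loop. Throughout I will use that $S$ only grows, and I will record, for each vertex, the iteration in which it leaves $H_f$ (either by being selected as a source and placed into $S$, or by being deleted inside a $KNeighborhood$ call). I will also use the fixed acyclic flow $f$ from Lemma~\ref{acyclic}: for each edge $e=uw$ the node $v_e$ distributes one unit between $u$ and $w$, and $\ora{uw}\in G_f$ exactly when strictly more than $\tfrac12$ of that unit is sent to $w$. Two invariants will drive everything. First, at the moment a vertex is added to $S$ it has fewer than $k$ neighbours already in $S$, for otherwise it would have been removed by an earlier $KNeighborhood(H,S,k)$ call. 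Second, since each iteration enlarges $S$ by one vertex, a vertex $v\notin S$ is deleted precisely in the iteration in which its number of neighbours in $S$ first reaches $k$, so at deletion $v$ has exactly $k$ neighbours in $S$.

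For the degeneracy claim I would order the vertices of $S$ by the iteration in which they were added. By the first invariant, each vertex of $S$ has at most $k-1$ neighbours among the vertices added before it, and this order is exactly a witness that $G[S]$ is $(k-1)$-degenerate.

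The main work, and the step I expect to be the real obstacle, is the bound $\mad(G-S)\le\mad(G)-k$. Rather than estimate edge counts directly, I would route it through the flow network and Lemma~\ref{flow}: it suffices to exhibit a maximum flow of value $|E(G-S)|$ in $F\!\left(G-S,\tfrac{\mad(G)-k}{2}\right)$, since by that lemma such a flow forces $\mad(G)-k\ge\mad(G-S)$. The natural candidate is the restriction of $f$ that keeps only the edge-nodes $v_e$ for $e\in E(G-S)$ and routes each such unit to its two endpoints exactly as $f$ does; its value is $|E(G-S)|$, which is maximal. The sole thing to check is the vertex capacity $\tfrac{\mad(G)-k}{2}$ at each $v\in V(G)\setminus S$. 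In $f$ the total inflow to $v$ is at most $\tfrac{\mad(G)}{2}$, so I must argue that the edges from $v$ to $S$, which are discarded in the restriction, were carrying at least $\tfrac{k}{2}$ units into $v$.

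This is where the second invariant and the structure of $G_f$ combine, and it is the crux. Consider the $k$ neighbours of $v$ in $S$ present when $v$ is deleted; each of them was picked by $NoInEdges$ as a source of $H_f$ at an iteration at which $v$ was still present. A source has no incoming edges in the current $H_f$, so the edge between $v$ and such a neighbour $x$ cannot be oriented $\ora{vx}$; hence it is either oriented $\ora{xv}$ or discarded, and in both cases $f$ sends at least $\tfrac12$ unit into $v$ along it. Summing over these $k$ neighbours gives at least $\tfrac{k}{2}$ units entering $v$ along deleted edges, so after the restriction $v$ receives at most $\tfrac{\mad(G)}{2}-\tfrac{k}{2}=\tfrac{\mad(G)-k}{2}$, establishing feasibility. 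Lemma~\ref{flow} then delivers the $\mad$ bound, and the polynomial running time follows from Lemma~\ref{acyclic} together with the observation that each loop iteration is cheap.
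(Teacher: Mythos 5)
Your proposal is correct and follows essentially the same argument as the paper: the insertion order witnesses $(k-1)$-degeneracy, and the $\mad$ bound is obtained by restricting $f$ to $F\!\left(G-S,\tfrac{\mad(G)}{2}-\tfrac{k}{2}\right)$ and observing that each surviving vertex had $k$ neighbours in $S$ that were sources of $H_f$ while it was still present, so each such edge carried at least $\tfrac12$ unit into it. You correctly handle the one subtle point (that absence of an incoming edge in $H_f$ transfers to $G_f$ only because the vertex was still present), which is exactly the paper's reasoning.
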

\begin{proof}
  First we argue the algorithm will return $(k - 1)$-degenerate set.
  In each iteration $x$ picked by algorithm is adjacent to at most $k - 1$ already picked vertices.
  So $S$ is $(k-1)$-degenerate indeed.
  
  To show that $mad(G - S) \le mad(G) - k$ we just have to find a flow $f'$ in graph 
  $F' := F(G - S,\frac{mad(G)}{2} - \frac{k}{2})$ of capacity $E(G - S)$ thanks to Lemma \ref{flow}.
  Observe that $F'$ is a subgraph of $F$
  with capacities of edges on the third layer reduced by $\frac{k}{2}$.
  Flow $f'$ has to saturate all edges from the first layer in order to have capacity $E(G - S)$.
  On the second layer we define $f'$ using $f$,
  for each edge from the second layer of $F'$ flow $f'$ will send exactly the same amount of flow 
  as $f$ on corresponding edge in $F$. 
  Now we just have to argue that amount of flow sent by $f'$ to any node between second and third layer in $F'$ is bounded 
  by $\frac{mad(G)}{2} - \frac{k}{2}$ i.e.\ capacity of edge going from that node to sink.
  Each such node corresponds to vertex from $G - S$,
  so let us take arbitrary vertex $u \in G - S$.
  During execution of the algorithm vertex $u$ has been removed from $H_f$ as incident to some $k$ vertices already picked to $S$.
  Denote them $x_1, \ldots, x_k$ and let us consider arbitrary $x_i$.
  When the algorithm picked $x_i$ from $H_f$, there were no incoming edges to $x_i$.
  In particular in $H_f$ there was no edge $\ora{ux_i}$.
  At that time $u$ still belonged to $H_f$, so there was no edge $\ora{ux_i}$ even in $G_f$.
  Since $u$ and $x_i$ are adjacent in $G$, there was either an edge $\ora{x_iu}$ in $G_f$
  which means that flow $f$ sends more than $\frac{1}{2}$ unit of flow from $v_{ux_i}$ to $u$ in $F$
  or there was no $\ora{x_iu}$ and $\ora{ux_i}$ which means that flow $f$ sends exactly
  $\frac12$ unit of flow from $v_{ux_i}$ to $u$ in $F$.
  Through node $u$ in $F$ flow $f$ sends at most $\frac{mad(G)}{2}$ units of flow
  and for every $1 \le i \le k$ at least $\frac{1}{2}$ unit of flow comes from $v_{ux_i}$ to $u$.
  Therefore flow going through $u$ is decreased by at least $\frac{1}{2}$ unit per each $x_i$ in $F'$
  what implies that $f'$ sends at most $\frac{mad(G)}{2} - \frac{k}{2}$ units of flow to vertex $u$ in $F'$.
  
\end{proof}

  What is more, procedure $Solve(G, G_f, k)$ can be trivially implemented in a polynomial time.
  Theorem \ref{mainres} directly follows from Theorem \ref{algo}.
  \filbreak
  \begin{samepage}
  As two notable special cases we mention following corollaries:
  \begin{theorem} \label{IndSet}
   For every undirected simple graph $G$
 there exists $I \subseteq V(G)$ such that $I$ is an independent set and $\mad(G - I) \le \mad(G) - 1$.
 Moreover such $I$ can be computed in polynomial time. 
  \end{theorem}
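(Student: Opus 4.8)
The plan is to derive Theorem~\ref{IndSet} as the special case $k=1$ of the main result, Theorem~\ref{mainres} (equivalently, of Theorem~\ref{algo}). So the first thing I would do is check that the hypotheses of the general theorem are met in this situation.

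\medskip

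First I would handle the triviality at the boundary. Theorem~\ref{mainres} requires $\mad(G) \ge k$, so to invoke it with $k=1$ I need $\mad(G) \ge 1$. If $G$ has no edges, then $\mad(G) = 0 < 1$; but in that case the conclusion is immediate, since we may simply take $I = \emptyset$ (or indeed $I = V(G)$), an independent set, and $\mad(G - I) \le \mad(G) - 1$ holds because $\mad(\emptyset) = -\infty$ and more simply $\mad(G-I)=\mad(G)=0 > -1$ fails — so here I must be careful and take $I=V(G)$, giving $\mad(G-I)=-\infty \le \mad(G)-1$. Thus the edgeless case is dispatched by hand, and otherwise $\mad(G)\ge 1$ and Theorem~\ref{mainres} applies.

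\medskip

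Second, with $\mad(G)\ge 1$ I would apply Theorem~\ref{mainres} (or run $Solve(G,G_f,1)$ from Theorem~\ref{algo}) with $k=1$ to obtain a set $S=I\subseteq V(G)$ with $\mad(G-I)\le \mad(G)-1$ and with $G[I]$ being $(k-1)$-degenerate, i.e.\ $0$-degenerate. The only remaining step is to translate ``$0$-degenerate'' into the statement of the corollary: by the observation recorded in the Preliminaries, the class of $0$-degenerate graphs is exactly the class of edgeless graphs, so $G[I]$ having no edges means precisely that $I$ is an independent set. That identification is the crux, and it is the one point where I would make sure to cite the Preliminaries explicitly.

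\medskip

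Finally, the polynomial-time claim is inherited directly: Theorem~\ref{mainres} already guarantees that the set $S$ can be computed in polynomial time via the procedure $Solve$, so the same bound gives polynomial-time computability of $I$. I do not anticipate a genuine obstacle here — the content is entirely in the general theorem. The one place that needs a moment's care is the degenerate edgeless case and the exact meaning of $\mad$ on an empty vertex set, which I would handle by the convention $\mad(\emptyset)=-\infty$ together with choosing $I=V(G)$ there.
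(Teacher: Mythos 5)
Your proposal is correct and follows exactly the route the paper intends: Theorem~\ref{IndSet} is stated as a direct specialization of Theorem~\ref{mainres} with $k=1$, using the observation from the Preliminaries that $0$-degenerate means edgeless, i.e.\ independent. Your extra care with the edgeless case (where $\mad(G)<1$ and one must take $I=V(G)$ under the convention $\mad(\emptyset)=-\infty$) is a point the paper glosses over --- its abstract even adds the hypothesis ``if $G$ contains at least one edge'' --- so your treatment is, if anything, slightly more complete than the paper's.
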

  
  \begin{theorem} \label{Forest}
   For every undirected simple graph $G$
 there exists $F \subseteq V(G)$ such that $G[F]$ is a forest and $\mad(G - F) \le \mad(G) - 2$.
 Moreover such $F$ can be computed in polynomial time. 
  \end{theorem}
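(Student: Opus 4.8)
The plan is to obtain Theorem \ref{Forest} as the instance $k = 2$ of Theorem \ref{mainres}, using the dictionary between low degeneracy and forests recorded in Section \ref{Prelims}. The first step is to recall that observation: a graph is $1$-degenerate if and only if it is a forest (equivalently, if and only if its maximum average degree is strictly below $2$). Consequently the degeneracy guarantee that Theorem \ref{mainres} provides for $k = 2$ — namely that the deleted set induces a $(k-1) = 1$-degenerate subgraph — is exactly the assertion that the deleted set induces a forest. So up to this translation, Theorem \ref{Forest} is not a new claim but a rereading of Theorem \ref{mainres}.

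Next I would split on the value of $\mad(G)$. In the case $\mad(G) \ge 2$ the hypothesis $\mad(G) \ge k$ of Theorem \ref{mainres} holds with $k = 2$, so the theorem supplies $S \subseteq V(G)$ with $G[S]$ being $1$-degenerate and $\mad(G - S) \le \mad(G) - 2$. Setting $F := S$ and invoking the equivalence above shows $G[F]$ is a forest with $\mad(G - F) \le \mad(G) - 2$, as required. Since Theorem \ref{mainres} already asserts that such a set is computable in polynomial time (through $Solve(G, G_f, 2)$), the polynomial-time claim is inherited for free in this case.

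The only genuinely delicate point — and hence the step I expect to be the main, albeit modest, obstacle — is the boundary case $\mad(G) < 2$, where Theorem \ref{mainres} cannot be invoked because its hypothesis $\mad(G) \ge k$ fails for $k = 2$. Here I would use the dictionary in the opposite direction: $\mad(G) < 2$ forces $G$ to be a forest, so one may simply take $F := V(G)$. Then $G[F] = G$ is a forest, and $G - F$ is the empty graph, whose maximum average degree is set to $-\infty \le \mad(G) - 2$ by the convention in Section \ref{Prelims}. Deciding which case applies amounts to computing $\mad(G)$, which Lemma \ref{flow} renders polynomial, so the overall procedure is polynomial. Thus the substantive work all resides in Theorem \ref{mainres}; the care needed here is purely the bookkeeping at $\mad(G) = 2$, ensuring the forest-versus-degeneracy correspondence is applied on the correct side and that graphs already forming a forest, which lie outside the hypothesis of Theorem \ref{mainres}, are disposed of by taking $F$ to be everything.
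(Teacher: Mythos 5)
Your proposal is correct and matches the paper's (implicit) derivation: the paper presents Theorem \ref{Forest} as an immediate special case of Theorem \ref{mainres} with $k=2$, using the equivalence between $1$-degenerate graphs and forests noted in Section \ref{Prelims}. Your explicit treatment of the boundary case $\mad(G) < 2$ (taking $F = V(G)$ when $G$ is already a forest) is a welcome bit of care that the paper leaves unstated, and it is needed for the ``every graph'' phrasing of the statement.
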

  \end{samepage}

\section{Conclusions and open problems} \label{conclusions}

As already mentioned, our result is a positive answer for the
open problem presented in \cite{EDU}, namely problem~2 from final remarks
which in turn implies sub-exponential bound on the diameter of reconfiguration graphs of
colourings for graphs with any bounded maximum average degree. However this bound
has already been improved in \cite{Feghali} to the polynomial bound depending on value of $\mad(G)$.

Our results imply many results
for some specific classes of graphs as a direct consequence
and here we mention a few of them.

Following folklore fact will come in handy in deriving some of the consequences:
\begin{fact}\label{fakcik}
 For every planar graph $G$ we have $(\mad(G) - 2)(g(G) - 2) < 4$.
\end{fact}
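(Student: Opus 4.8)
The plan is to reduce the statement to the classical Euler-formula bound on the number of edges of a planar graph of prescribed girth, applied not to $G$ itself but to a subgraph realizing $\mad(G)$. First I would pick a subgraph $H \subseteq G$ attaining the maximum in the definition of $\mad$, i.e.\ with $\frac{2|E(H)|}{|V(H)|} = \mad(G)$, and argue that $H$ may be taken connected: the ratio $\frac{2|E(\cdot)|}{|V(\cdot)|}$ of a disjoint union is a weighted average of the ratios of its components, so some component is at least as dense and may replace $H$. Being a subgraph of a planar graph, $H$ is planar, and since every cycle of $H$ is a cycle of $G$ we have $g(H) \ge g(G)$.

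Next I would dispose of the degenerate case. If $\mad(G) < 2$ then $\mad(G) - 2 < 0$ while $g(G) - 2 > 0$ (the girth of a graph with a cycle is at least $3$, and for a forest $g(G) = \infty$, the product being interpreted as $-\infty$), so the left-hand side is negative and the inequality holds trivially. Hence I may assume $\mad(G) \ge 2$; then $H$ is connected with $|E(H)| \ge |V(H)|$, so it contains a cycle and $g(H)$ is finite with $3 \le g(G) \le g(H)$.

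For the main case I would invoke Euler's formula on the connected plane graph $H$: writing $n = |V(H)|$, $m = |E(H)|$ and letting $f$ be the number of faces, $n - m + f = 2$. Each face is bounded by a closed walk of length at least $g(H)$ and each edge lies on the boundary of at most two faces, so $2m \ge g(H)\,f$. Substituting $f \le \frac{2m}{g(H)}$ into Euler's formula gives $m \le \frac{g(H)}{g(H)-2}(n-2)$, whence
\[
\mad(G) = \frac{2m}{n} \le \frac{2\,g(H)}{g(H)-2}\cdot\frac{n-2}{n} < \frac{2\,g(H)}{g(H)-2},
\]
which rearranges to $(\mad(G)-2)\,(g(H)-2) < 4$. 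Finally, since $\mad(G) - 2 \ge 0$ and $g(G) - 2 \le g(H) - 2$, I conclude $(\mad(G)-2)(g(G)-2) \le (\mad(G)-2)(g(H)-2) < 4$.

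I expect the only delicate point to be the face-counting inequality $2m \ge g(H)\,f$: it requires that $H$ be connected, so that Euler's formula applies in the stated form, and that $H$ genuinely contain a cycle, so that the girth is a valid lower bound on the length of every face boundary, including the outer one (bridges are traversed twice but only lengthen the relevant boundary walks, so the bound is unaffected). Both conditions are secured by the reductions above, after which the remainder is routine algebra.
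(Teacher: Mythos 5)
Your proof is correct: reducing to a connected densest subgraph $H$, disposing of the forest case, and applying the Euler-formula edge bound $m \le \frac{g}{g-2}(n-2)$ (valid for connected plane graphs containing a cycle, with bridges only lengthening face boundary walks) is exactly the standard folklore argument, and your handling of the delicate points --- connectivity of $H$, existence of a cycle in $H$ when $\mad(G)\ge 2$, and the monotonicity step $g(G)\le g(H)$ --- is sound. The paper itself states this as a folklore fact without proof, so there is no authorial argument to compare against; your write-up supplies precisely the justification the paper implicitly relies on.
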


Based on Theorem \ref{IndSet}, we are able to improve Theorem 1 from \cite{Dross} and one of its consequences.

\begin{theorem}[\cite{Dross}]
Let $M$ be a real number such that $M <3$.  Let $d>0$ be an integer and let $G$ be a graph with $\mad(G)< M$.  If $d>\frac{2}{3-M}-2$, then $V(G)$ can be partitioned into $A \uplus B$ such that $G[A]$ is an independent set and $G[B]$ is a forest with maximum degree at most $d$.
\end{theorem}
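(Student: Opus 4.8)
The plan is to feed $G$ into Theorem~\ref{IndSet} to peel off a single independent set, observe that what remains is automatically a forest because $M<3$, and then recover the maximum-degree constraint on that forest for free from the $\mad$ bound rather than by any delicate argument.

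Concretely, I would first apply Theorem~\ref{IndSet} to obtain an independent set $I \subseteq V(G)$ with $\mad(G-I) \le \mad(G) - 1 < M - 1$. Setting $A = I$ and $B = V(G) \setminus I$, the set $A$ is independent by construction. Since $M < 3$ gives $\mad(G[B]) = \mad(G-I) < 2$, and since graphs of $\mad$ below $2$ are exactly the forests (as noted in the preliminaries), $G[B]$ is a forest. This already delivers the partition into an independent set and an induced forest; the only remaining point is the degree bound.

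For the degree bound I would first rewrite the hypothesis. Because $M < 3$, the inequality $d > \frac{2}{3-M} - 2$ is equivalent to $(d+2)(3-M) > 2$, i.e.\ to $M < 3 - \frac{2}{d+2}$, and hence
$$\mad(G[B]) < M - 1 < 2 - \frac{2}{d+2} = \frac{2(d+1)}{d+2}.$$
Now take any vertex $u$ of positive degree $\delta$ in $G[B]$ (vertices of degree $0$ trivially satisfy $\delta \le d$). The star formed by $u$ together with its $\delta$ neighbours and the $\delta$ connecting edges is a subgraph of $G[B]$ on $\delta+1$ vertices with $\delta$ edges, so its average degree is $\frac{2\delta}{\delta+1}$. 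By definition of $\mad$ this forces $\frac{2\delta}{\delta+1} \le \mad(G[B]) < \frac{2(d+1)}{d+2}$. Since $x \mapsto \frac{2x}{x+1}$ is strictly increasing, we get $\delta < d+1$, i.e.\ $\delta \le d$, so $\Delta(G[B]) \le d$, as required.

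I expect the main conceptual step to be the last one: recognising that the awkward-looking condition $d > \frac{2}{3-M} - 2$ is precisely what the trivial star subgraph forces through the $\mad$ value of the leftover forest, so that no separate control of maximum degree (which is where the original argument of~\cite{Dross} invests its effort) is needed. The only genuinely mechanical parts are verifying the equivalence of the two forms of the hypothesis and the monotonicity inequality, both of which are routine. Since Theorem~\ref{IndSet} is constructive and runs in polynomial time, the whole partition is produced in polynomial time as well.
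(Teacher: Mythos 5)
Your proof is correct and is essentially the argument the paper intends: the theorem is quoted from \cite{Dross}, and the paper rederives (and strengthens) it exactly as you do, by applying Theorem \ref{IndSet} and then a direct calculation on the resulting bound $\mad(G[B]) \le \mad(G) - 1 < M - 1 < \frac{2(d+1)}{d+2}$. The only difference is that the paper tests this bound against an entire connected component of the forest $G[B]$ (average degree $\frac{2(n-1)}{n}$ for a component on $n$ vertices) rather than just a star around one vertex, which yields the stronger conclusion of Theorem \ref{DrossRekt} that every component has at most $d+1$ vertices and hence, in particular, your bound $\Delta(G[B]) \le d$.
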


By using Theorem \ref{IndSet} and direct calculation, we are able to strengthen this to the following theorem:

\begin{theorem} \label{DrossRekt} Let $M$ be a real number such that $M <3$.  Let $d>0$ be an integer and let $G$ be a graph with $\mad(G)< M$.  If $d>\frac{2}{3-M}-2$, then $V(G)$ can be partitioned into $A \uplus B$ such that $G[A]$ is an independent set and $G[B]$ is a forest whose connected components have size at most $d + 1$.
\end{theorem}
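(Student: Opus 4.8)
The plan is to feed the graph straight into Theorem~\ref{IndSet} and then read off the component-size bound from an elementary computation about forests. First I would apply Theorem~\ref{IndSet} to $G$ to obtain an independent set $A \subseteq V(G)$ with $\mad(G - A) \le \mad(G) - 1$. Setting $B := V(G) \setminus A$, we have $G[B] = G - A$ and hence $\mad(G[B]) < M - 1$. Since $M < 3$ this yields $\mad(G[B]) < 2$, which by the observation recorded in Section~\ref{Prelims} is exactly the condition for $G[B]$ to be a forest. So $A$ is independent and $G[B]$ is a forest, and it remains only to bound the sizes of the connected components of $G[B]$.

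The key observation is that the value of $\mad$ on a forest controls the size of its largest component. Let $C$ be any connected component of $G[B]$ and put $\ell := |C|$. Since $G[B]$ is a forest, $C$ is a tree and therefore has exactly $\ell - 1$ edges, so taking $C$ as the subgraph $H$ in the definition of $\mad$ gives
\[
  2 - \frac{2}{\ell} = \frac{2(\ell - 1)}{\ell} \le \mad(G[B]) < M - 1 .
\]
Rearranging, and using $3 - M > 0$, this becomes $3 - M < \frac{2}{\ell}$, i.e.\ $\ell < \frac{2}{3 - M}$.

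I would then combine this with the hypothesis $d > \frac{2}{3 - M} - 2$, which is equivalent to $\frac{2}{3 - M} < d + 2$. Chaining the two bounds gives $\ell < d + 2$, and since $\ell$ is a positive integer this forces $\ell \le d + 1$. As $C$ was an arbitrary component, every connected component of $G[B]$ has size at most $d + 1$, which together with the independence of $A$ establishes the partition $V(G) = A \uplus B$ with the claimed properties. Because $A$ is produced in polynomial time by Theorem~\ref{IndSet} and $B$ is just its complement, the whole partition is moreover constructible in polynomial time.

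I do not expect a serious obstacle here: Theorem~\ref{IndSet} carries out all of the structural work, and what remains is only the elementary inequality relating $2 - 2/\ell$ to $\mad(G[B])$ and the algebraic manipulation of the threshold $\frac{2}{3-M} - 2$. The one point worth checking explicitly is that this really is a strengthening of the Dross statement rather than an incomparable variant: a component of size at most $d + 1$ automatically has maximum degree at most $d$, so the new conclusion implies the old one.
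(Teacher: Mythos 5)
Your proposal is correct and matches the paper's intended argument: the paper derives Theorem~\ref{DrossRekt} precisely ``by using Theorem~\ref{IndSet} and direct calculation,'' namely extracting the independent set $A$, noting $\mad(G[B]) < M-1 < 2$ forces a forest, and bounding each tree component of size $\ell$ via $2 - \frac{2}{\ell} \le \mad(G[B])$ to get $\ell < \frac{2}{3-M} < d+2$. Your write-up fills in exactly that computation, so there is nothing to add.
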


Authors of \cite{Dross} use their theorem and Fact \ref{fakcik} to deduce the following corollary:
\begin{corollary}
 For every planar graph $G$ with $g(G) \ge 10$, the vertex set $V(G)$ can be partitioned into $A \uplus B$ such that $A$ is an independent set and $G[B]$ is a forest with maximum degree $2$.
\end{corollary}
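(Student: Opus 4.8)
The plan is to reduce the corollary to Theorem \ref{DrossRekt} by using Fact \ref{fakcik} to convert the girth hypothesis into a numerical upper bound on $\mad(G)$, and then to choose the parameters $M$ and $d$ so that Theorem \ref{DrossRekt} applies with $d = 2$. The only role of planarity is to license the use of Fact \ref{fakcik}; everything after that is arithmetic on a general graph of small maximum average degree.

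First I would bound $\mad(G)$. By Fact \ref{fakcik} we have $(\mad(G) - 2)(g(G) - 2) < 4$, and since $g(G) \ge 10$ we have $g(G) - 2 \ge 8$. If $\mad(G) \ge 2$, then $\mad(G) - 2 \ge 0$, so multiplying $g(G) - 2 \ge 8$ by the nonnegative quantity $\mad(G) - 2$ gives $8(\mad(G) - 2) \le (g(G) - 2)(\mad(G) - 2) < 4$, whence $\mad(G) < \frac{5}{2}$; if instead $\mad(G) < 2$, the bound $\mad(G) < \frac{5}{2}$ is immediate. In either case $\mad(G) < \frac{5}{2}$. Because this inequality is \emph{strict}, I can then fix a real number $M$ with $\mad(G) < M < \frac{5}{2}$ and invoke Theorem \ref{DrossRekt} with this $M$ and with $d = 2$. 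The one hypothesis to check is $d > \frac{2}{3-M} - 2$: from $M < \frac{5}{2}$ we get $3 - M > \frac{1}{2}$, hence $\frac{2}{3-M} < 4$ and therefore $\frac{2}{3-M} - 2 < 2 = d$, as required. Theorem \ref{DrossRekt} then yields a partition $V(G) = A \uplus B$ with $A$ independent and $G[B]$ a forest all of whose connected components have at most $d + 1 = 3$ vertices. Finally I would observe that any tree on at most three vertices is a path, so each component of $G[B]$ has maximum degree at most $2$; thus $G[B]$ is a forest of maximum degree at most $2$, which is the claimed conclusion.

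The computation is routine, so there is no genuine obstacle, but the one point that actually carries the argument is exploiting the strictness of the inequality in Fact \ref{fakcik}. Setting $M = \frac{5}{2}$ would give $\frac{2}{3-M} - 2 = 2$, which fails the strict hypothesis $d > \frac{2}{3-M} - 2$ for $d = 2$ and would only permit $d = 3$, i.e.\ maximum degree $3$. Using $\mad(G) < \frac{5}{2}$ to place $M$ strictly below $\frac{5}{2}$ is exactly what sharpens the conclusion from maximum degree $3$ down to maximum degree $2$.
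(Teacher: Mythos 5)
Your proof is correct and follows essentially the derivation the paper intends: Fact \ref{fakcik} with $g(G)\ge 10$ gives $\mad(G)<\frac{5}{2}$, and a Dross-type theorem is then applied with $M$ chosen strictly between $\mad(G)$ and $\frac{5}{2}$ and with $d=2$. The only cosmetic difference is that you route through the strengthened Theorem \ref{DrossRekt} (components of size at most $3$) and then weaken to maximum degree $2$, whereas the paper attributes this corollary to the original theorem of \cite{Dross}, whose conclusion is already phrased as ``a forest with maximum degree at most $d$'' and so yields the statement directly.
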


We are able to strengthen this to the following corollary:
\begin{corollary}
 For every planar graph $G$ with $g(G) \ge 10$, the vertex set $V(G)$ can be partitioned into $A \uplus B$ such that $A$ is an independent set and
 $G[B]$ is a forest whose connected components have size at most $3$.
\end{corollary}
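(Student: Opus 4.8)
The plan is to derive this corollary exactly as \cite{Dross} derived its weaker version, simply feeding the stronger structural conclusion of Theorem~\ref{DrossRekt} into the same girth-to-$\mad$ translation. First I would use Fact~\ref{fakcik} to convert the girth hypothesis into a bound on the maximum average degree. Substituting $g(G) \ge 10$ gives $(\mad(G)-2)\cdot 8 \le (\mad(G)-2)(g(G)-2) < 4$ whenever $\mad(G) \ge 2$, and hence $\mad(G) < 5/2$; in the remaining case $\mad(G) < 2$ the bound $\mad(G) < 5/2$ holds trivially. So in all cases $\mad(G) < 5/2$, strictly.

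Next I would set up the parameters for Theorem~\ref{DrossRekt}. Since the bound $\mad(G) < 5/2$ is strict, I can choose a real threshold $M$ with $\mad(G) < M < 5/2$. For such $M$ one has $3 - M > 1/2$, whence $\frac{2}{3-M} - 2 < 4 - 2 = 2$. Taking the integer $d = 2$, the two hypotheses of Theorem~\ref{DrossRekt} are met: $M < 3$ because $M < 5/2 < 3$, and $d > \frac{2}{3-M} - 2$ because the right-hand side is strictly below $2 = d$.

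Finally I would invoke Theorem~\ref{DrossRekt} with these choices of $M$ and $d$. It produces a partition $V(G) = A \uplus B$ in which $A$ is an independent set and $G[B]$ is a forest whose connected components each have size at most $d + 1 = 3$, which is precisely the asserted conclusion.

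The only delicate point—more bookkeeping than obstacle—is the handling of the strict inequalities at the boundary: one must first extract the strict bound $\mad(G) < 5/2$ from Fact~\ref{fakcik}, and then interpose $M$ strictly between $\mad(G)$ and $5/2$ so that the borderline integer $d = 2$ satisfies the open condition $d > \frac{2}{3-M}-2$. If one were to work with the non-strict value $M = 5/2$ directly, the condition would degenerate to $d > 2$ and force $d = 3$, yielding only components of size $4$; it is exactly the strictness coming from $g(G) \ge 10$ that buys the improvement to size $3$. Every other step is a direct substitution into Theorem~\ref{DrossRekt}.
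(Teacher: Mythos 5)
Your proposal is correct and follows exactly the route the paper intends (and leaves implicit): convert $g(G)\ge 10$ into $\mad(G)<\frac52$ via Fact~\ref{fakcik}, then apply the strengthened Theorem~\ref{DrossRekt} with $d=2$, choosing $M$ strictly between $\mad(G)$ and $\frac52$ so that the open condition $d>\frac{2}{3-M}-2$ holds. Your attention to the strictness of the inequalities at the boundary is exactly the right bookkeeping.
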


Borodin et al. proved in \cite{YETANOTHERBORODIN} that vertex set of any planar graph with $g(G) \ge 7$ admits a $(\mathcal{I}, \Delta_4)$-partition, i.e. partition into an independent set and a set that induces graph with maximum degree at most $4$. Dross et al. proved in \cite{Dross} that vertex set of any planar graph with $g(G) \ge 7$ admits a $(\mathcal{I}, \mathcal{F}_3)$-partition, i.e. into an independent set and a set that induces forest of max degree at most $5$. In Corollary \ref{planar7} we add another partition result for the class of planar graphs with girth at least $7$.

\begin{corollary} \label{planar7}
 For every planar graph $G$ with $g(G) \ge 7$, the vertex set $V(G)$ can be partitioned
 into $A \uplus B$ such that $A$ is an independent set and $G[B]$ is a forest
 where every connected component has at most~$9$ vertices.
\end{corollary}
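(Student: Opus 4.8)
The plan is to combine the girth–mad inequality (Fact \ref{fakcik}) with Theorem \ref{IndSet} and a careful accounting of the degrees inside the forest that survives after removing the independent set. The final statement is a claim about planar graphs of girth at least $7$, so first I would use Fact \ref{fakcik} to pin down an upper bound on $\mad(G)$. Plugging $g(G) \ge 7$ into $(\mad(G)-2)(g(G)-2) < 4$ gives $(\mad(G)-2) \cdot 5 < 4$, hence $\mad(G) < 2 + \tfrac{4}{5} = \tfrac{14}{5}$. This is the numerical fact that will drive everything else, so I would state it explicitly at the start.

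Next I would apply Theorem \ref{IndSet} to obtain an independent set $A = I$ with $\mad(G - I) \le \mad(G) - 1 < \tfrac{9}{5}$. Set $B = V(G) \setminus A$. Since $\mad(G[B]) \le \mad(G-I) < \tfrac{9}{5} < 2$, the earlier observation in the Preliminaries (that $\mad < 2$ characterises exactly the forests) tells us that $G[B]$ is a forest. So the partition $V(G) = A \uplus B$ already has $A$ independent and $G[B]$ a forest; the whole content left is the quantitative bound on component sizes.

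The remaining step is to bound the size of each connected component of the forest $G[B]$. The key tool here is that a nonempty graph $H$ with $\mad(H) < 2 - \tfrac{2}{n}$ cannot contain a tree (equivalently, a connected subgraph) on $n$ vertices: a tree on $n$ vertices has exactly $n-1$ edges and hence average degree $\tfrac{2(n-1)}{n} = 2 - \tfrac{2}{n}$, so if such a component were present as a subgraph $H$ of $G[B]$ we would have $\mad(G[B]) \ge 2 - \tfrac{2}{n}$. I would therefore solve $2 - \tfrac{2}{n} \ge \tfrac{9}{5}$ for the threshold: this rearranges to $\tfrac{2}{n} \le \tfrac{1}{5}$, i.e.\ $n \ge 10$. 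Thus a component on $10$ or more vertices would force $\mad(G[B]) \ge 2 - \tfrac{2}{10} = \tfrac{9}{5}$, contradicting $\mad(G[B]) < \tfrac{9}{5}$. Consequently every component of $G[B]$ has at most $9$ vertices, which is exactly the claimed bound.

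The main obstacle, and the place to be most careful, is the boundary arithmetic around $\tfrac{9}{5}$: the inequality from Fact \ref{fakcik} is strict, and the tree-on-$n$-vertices bound is tight, so I would make sure the strictness is tracked so that $n = 10$ is genuinely excluded (giving components of size at most $9$) rather than merely $n = 11$. Concretely, a component of size exactly $10$ would have $\mad \ge \tfrac{9}{5}$, but our strict bound gives $\mad(G[B]) < \tfrac{9}{5}$, so size $10$ is ruled out and the cutoff is clean. Everything else is a direct substitution, so no heavy machinery beyond Theorem \ref{IndSet} and the $\mad$-versus-forest dictionary is needed.
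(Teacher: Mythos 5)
Your proposal is correct and follows the same route as the paper: bound $\mad(G) < \tfrac{14}{5}$ via Fact \ref{fakcik}, apply Theorem \ref{IndSet} to get $\mad(G[B]) < \tfrac{9}{5}$, and conclude that $G[B]$ is a forest with components of size at most $9$. The only difference is that you spell out the component-size computation ($2 - \tfrac{2}{n} \ge \tfrac{9}{5}$ for $n \ge 10$) that the paper dismisses as ``readily verified,'' which is a welcome addition but not a different argument.
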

\begin{proof}
 Since $g(G) \ge 7$ we deduce that $\mad(G) < 1 + \frac95$, so based on Corollary \ref{IndSet}
 we get that there exist $A$ and $B$ such that $V(G) = A \uplus B$, $A$ is an independent set
 and $\mad(G[B]) < \frac95$.
 It can be readily verified that class of graphs with value of their $\mad$ smaller than $\frac95$ is class of graphs which are forests
 with connected components of size at most $9$.
\end{proof}

Apart from that, based on Theorems \ref{IndSet} and \ref{Forest} and Fact \ref{fakcik} we are able to deduce following corollaries:

\begin{corollary} \label{3lasy}
 For every planar graph $G$, the vertex set $V(G)$ can be partitioned
 into $A \uplus B \uplus C$ such that $G[A], G[B], G[C]$ are forests.
\end{corollary}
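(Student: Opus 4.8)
The plan is to observe that planar graphs have $\mad$ bounded by $6$ and then to peel off two forests by applying Theorem \ref{Forest} twice, leaving a remainder sparse enough to be a forest on its own. First I would record the standard density bound: every planar graph satisfies $\mad(G) < 6$. Indeed, any nonempty subgraph $H$ of $G$ is again planar, so Euler's formula gives $|E(H)| \le 3|V(H)| - 6$ whenever $|V(H)| \ge 3$ (the cases $|V(H)| \le 2$ being immediate in a simple graph), whence $2|E(H)|/|V(H)| < 6$ for every such $H$ and therefore $\mad(G) < 6$.

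Next I would apply Theorem \ref{Forest} to $G$ to obtain a set $A \subseteq V(G)$ with $G[A]$ a forest and
$$\mad(G - A) \le \mad(G) - 2 < 4.$$
I would then apply Theorem \ref{Forest} a second time, now to the graph $G - A$ regarded as a graph in its own right, to obtain a set $B \subseteq V(G - A)$ with $(G - A)[B]$ a forest and
$$\mad\bigl((G - A) - B\bigr) \le \mad(G - A) - 2 < 2.$$
Since $B$ is disjoint from $A$, the induced subgraph $(G - A)[B]$ coincides with $G[B]$, so $G[B]$ is a forest as well.

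Finally I would set $C := V(G) \setminus (A \cup B)$, so that $G[C] = (G - A) - B$. By the displayed computation $\mad(G[C]) < 2$, and by the remark in the preliminaries that $\mad(H) < 2$ characterizes exactly the forests (the $1$-degenerate graphs), $G[C]$ is a forest. Hence $V(G) = A \uplus B \uplus C$ with each of $G[A], G[B], G[C]$ inducing a forest, which is the desired conclusion.

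I do not expect a genuine obstacle here. The only things to verify are the bound $\mad(G) < 6$ for planar graphs and the harmless point that a forest on a vertex subset disjoint from $A$ remains a forest when viewed inside $G$; note also that Theorem \ref{Forest} is stated unconditionally, so no separate check of $\mad \ge 2$ is needed before either application. The substantive content is carried entirely by Theorem \ref{Forest}, and the argument is simply two invocations of it chained through the inequality $\mad(G - S) \le \mad(G) - 2$, driving $\mad$ from below $6$ down below $2$ in two steps.
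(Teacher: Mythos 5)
Your proposal is correct and follows essentially the same route as the paper: both bound $\mad(G) < 6$ for planar graphs and apply Theorem \ref{Forest} twice to peel off two forests, leaving a remainder with $\mad < 2$, hence a forest. Your version merely adds harmless extra detail (the Euler-formula justification of the density bound and the remark about induced subgraphs), which the paper leaves implicit.
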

\begin{proof}
 Every planar graph satisfies $\mad(G) < 6$, so using Corollary \ref{Forest}
 we can partition $V(G)$ into $A$ and $D$ such that $\mad(G[A]) < 2$ and $\mad(G[D]) < 4$
 and then using Corollary \ref{Forest}
 again we can partition $D$ into $B$ and $C$ such that $\mad(G[B]) < 2$ and $\mad(G[C]) < 2$.
 Hence $G[A], G[B], G[C]$ are forests.
\end{proof}

\begin{corollary} \label{2lasy}
 For every planar graph $G$ without triangles, the vertex set $V(G)$ can be partitioned
 into $A \uplus B$ such that $G[A], G[B]$ are forests.
\end{corollary}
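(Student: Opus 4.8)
The plan is to reduce the statement to a single application of Theorem~\ref{Forest}, exactly in the spirit of the proof of Corollary~\ref{3lasy}, but now exploiting the stronger sparsity that triangle-freeness provides. The first step is to bound $\mad(G)$. Since $G$ is planar and contains no triangle, it has no cycle of length $3$, so either $G$ is a forest (in which case $g(G) = \infty$) or its shortest cycle has length at least $4$; in both cases $g(G) \ge 4$. Feeding $g(G) \ge 4$ into Fact~\ref{fakcik} yields $(\mad(G) - 2)\cdot 2 \le (\mad(G) - 2)(g(G) - 2) < 4$ whenever $\mad(G) > 2$, and trivially $\mad(G) \le 2 < 4$ otherwise, so in either case I obtain $\mad(G) < 4$.

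With this bound in hand, the second step is to apply Theorem~\ref{Forest} once. It produces a set $A \subseteq V(G)$ with $G[A]$ a forest and $\mad(G - A) \le \mad(G) - 2 < 4 - 2 = 2$. Setting $B := V(G) \setminus A$, the induced subgraph $G[B] = G - A$ therefore satisfies $\mad(G[B]) < 2$.

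The final step is to invoke the observation recorded in Section~\ref{Prelims} that a graph has maximum average degree smaller than $2$ precisely when it is a forest. Hence $G[B]$ is a forest, and since $G[A]$ is a forest by construction, $A \uplus B$ is the desired partition of $V(G)$ into two induced forests.

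I do not expect a genuine obstacle here; the whole argument is routine. The only point requiring a little care is the extraction of $\mad(G) < 4$ from the girth hypothesis through Fact~\ref{fakcik}, where one must keep track of the sign of $\mad(G) - 2$ before multiplying the inequality. The structural reason the proof is so short is that, unlike Corollary~\ref{3lasy} where $\mad(G) < 6$ forces two successive applications of Theorem~\ref{Forest}, here the single drop from below $4$ to below $2$ already lands inside the forest regime, so one application suffices.
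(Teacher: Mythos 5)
Your proof is correct and follows essentially the same route as the paper: use Fact~\ref{fakcik} with $g(G)\ge 4$ to get $\mad(G)<4$, then apply Theorem~\ref{Forest} once so that both $G[A]$ and $G[B]=G-A$ have maximum average degree below $2$ and are therefore forests. Your extra care about the sign of $\mad(G)-2$ when dividing in Fact~\ref{fakcik} is a small but welcome refinement of the paper's terser deduction.
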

\begin{proof}
 Based on Fact \ref{fakcik} we know that if $G$ has no triangles then $g(G) \ge 4 \Rightarrow g(G) - 2 \ge 2
 \Rightarrow \mad(G) < 4$. Therefore using Corollary \ref{Forest}
 we deduce that there exist $A, B$ such that $V(G) = A \uplus B$ and $G[A], G[B]$ are forests.
\end{proof}

\begin{corollary}\label{slabe1}
 For every planar graph $G$ without cycles of length $3$ and $4$, the vertex set $V(G)$
 can be partitioned into $A \uplus B$ such that $G[A]$ is a forest and $\Delta(G[B]) \le 1$.
\end{corollary}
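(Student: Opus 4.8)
The plan is to derive this from Corollary \ref{Forest} together with the girth-to-$\mad$ bound of Fact \ref{fakcik}, exactly in the spirit of the preceding corollaries. The one extra ingredient is the observation that a graph whose maximum average degree is small enough is forced to have maximum degree at most $1$; this is what lets me translate a $\mad$ bound on the leftover part into the required structural constraint $\Delta(G[B]) \le 1$.

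First I would bound $\mad(G)$. Forbidding cycles of length $3$ and $4$ in a planar graph means its girth satisfies $g(G) \ge 5$, so $g(G) - 2 \ge 3$; feeding this into Fact \ref{fakcik} gives $(\mad(G) - 2) \cdot 3 \le (\mad(G)-2)(g(G)-2) < 4$ (the left inequality being vacuous when $\mad(G) < 2$), whence $\mad(G) < \tfrac{10}{3}$. Applying Corollary \ref{Forest} then yields a set $A \subseteq V(G)$ with $G[A]$ a forest and $\mad(G - A) \le \mad(G) - 2 < \tfrac{4}{3}$. I set $B := V(G) \setminus A$, so that $A \uplus B = V(G)$ and $G[A]$ is a forest by construction; it remains only to check that $\Delta(G[B]) \le 1$.

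For that final step I would use the elementary fact that $\mad(H) < \tfrac{4}{3}$ implies $\Delta(H) \le 1$, proved by contraposition: if a vertex $v$ of $H$ had two distinct neighbours $a,b$, then the two edges $va$ and $vb$ form a subgraph on the $3$ vertices $\{v,a,b\}$ with $2$ edges, forcing $\mad(H) \ge \tfrac{2 \cdot 2}{3} = \tfrac{4}{3}$. Applying this to $H = G[B]$, whose maximum average degree has just been bounded strictly below $\tfrac{4}{3}$, gives $\Delta(G[B]) \le 1$ and finishes the partition. I do not expect a genuine obstacle in this argument; its only delicate point is bookkeeping the strictness of the inequalities, since the whole proof hinges on $\mad(G) - 2$ landing \emph{strictly} below the threshold $\tfrac{4}{3}$ at which the maximum degree of the remainder would be allowed to exceed $1$.
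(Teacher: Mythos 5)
Your proposal is correct and follows essentially the same route as the paper: bound $\mad(G) < 2 + \tfrac{4}{3}$ via Fact \ref{fakcik} from $g(G) \ge 5$, apply Corollary \ref{Forest} to split off a forest $A$ with $\mad(G[B]) < \tfrac43$, and conclude $\Delta(G[B]) \le 1$ by noting that a vertex of degree $2$ with its two neighbours would force $\mad \ge \tfrac43$. Your extra care about the strictness of the inequalities and the degenerate case $\mad(G) < 2$ is sound but not needed beyond what the paper already does.
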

\begin{proof}
 Since $g(G) \ge 5$ we deduce that $\mad(G) < 2 + \frac{4}{3}$, so based on Corollary \ref{Forest}
 we get that there exist $A$ and $B$ such that $V(G) = A \uplus B$ and $\mad(G[A]) < 2$
 and $\mad(G[B]) < \frac43$. Therefore $G[A]$ is a forest and $\Delta(G[B]) \le 1$, because
 if $G[B]$ contains a vertex with degree $\ge 2$ then this vertex together with its two neighbours
 induce a graph with $\mad$ at least $\frac43$.
\end{proof}

\begin{corollary}\label{slabe2}
 For every planar graph $G$ with $g(G) \ge 6$ its vertex set $V(G)$ can be partitioned
 into $A \uplus B$ such that $G[A]$ is a forest and $B$ is an independent set.
\end{corollary}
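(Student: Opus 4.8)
The plan is to reduce the statement to a single bound on $\mad(G)$ and then invoke Theorem~\ref{IndSet} essentially verbatim, exactly as in the other planar corollaries derived above. First I would extract the required bound from the girth hypothesis via Fact~\ref{fakcik}: since $g(G) \ge 6$ we have $g(G) - 2 \ge 4 > 0$, so from $(\mad(G) - 2)(g(G) - 2) < 4$ I may divide by $g(G)-2$ to obtain $\mad(G) - 2 < \frac{4}{g(G)-2} \le \frac{4}{4} = 1$, i.e.\ $\mad(G) < 3$.

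Next I would apply Theorem~\ref{IndSet} to $G$, producing an independent set $I \subseteq V(G)$ with $\mad(G - I) \le \mad(G) - 1 < 2$. I then set $B := I$ and $A := V(G) \setminus I$. By construction $B$ is independent, and $G[A] = G - I$ satisfies $\mad(G[A]) < 2$; since the class of graphs with $\mad < 2$ is precisely the class of forests (recorded in Section~\ref{Prelims}), $G[A]$ is a forest. Thus $V(G) = A \uplus B$ is the desired partition.

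The main point requiring attention is not any genuine obstacle but the careful handling of strict versus non-strict inequalities: Fact~\ref{fakcik} delivers $\mad(G) < 3$ strictly, so that $\mad(G) - 1 < 2$ strictly, and the bound $\mad(G-I) \le \mad(G)-1$ from Theorem~\ref{IndSet} then places $G-I$ strictly below the forest threshold. Beyond this bookkeeping the corollary is a direct specialization of Theorem~\ref{IndSet}, entirely parallel to Corollaries~\ref{planar7} and~\ref{slabe1}.
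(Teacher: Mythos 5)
Your proposal is correct and follows essentially the same route as the paper: bound $\mad(G) < 3$ via Fact~\ref{fakcik}, remove an independent set using Theorem~\ref{IndSet}, and observe the remainder has $\mad < 2$ and is hence a forest. The paper's proof is identical in substance (it notes the partition can alternatively be obtained from Theorem~\ref{Forest}), so no further comment is needed.
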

\begin{proof}
 Since $g(G) \ge 6$ we deduce that $\mad(G) < 3$, so based on either Corollary \ref{Forest}
 or Corollary \ref{IndSet}
 we get that there exist $A$ and $B$ such that $V(G) = A \uplus B$ and $\mad(G[A]) < 2$
 and $\mad(G[B]) < 1$. Therefore $G[A]$ is a forest and $B$ is an independent set.
\end{proof}

However, Corollaries \ref{3lasy}, \ref{2lasy}, \ref{slabe1} and \ref{slabe2} have already been proven and even improved before.
Corollary \ref{3lasy} has been proven in \cite{Chartrand1968} and later improved in \cite{LinArbo}.
Improved version of Corollary \ref{2lasy} has been proven in \cite{VertexArbo}.
Improved version of both Corollaries \ref{slabe1} and \ref{slabe2} has been proven in \cite{Girth5}.

As a main open problem in this area we recall the following one:
\begin{conjecture}
 For every graph $G$ and positive real numbers $c_1, c_2$ if $\mad(G) < c_1 + c_2$ then
 there exists a~partition of vertex set $V(G) = A \uplus B$ such that $\mad(G[A]) < c_1$
 and $\mad(G[B]) < c_2$.
\end{conjecture}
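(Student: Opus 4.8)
The plan is to phrase everything through the flow/orientation certificate of Section~\ref{main}, since that is the only handle the paper provides on $\mad$. By Lemma~\ref{flow} the hypothesis $\mad(G) < c_1 + c_2$ gives a flow of value $|E(G)|$ in $F(G, \frac{c_1+c_2}{2})$, i.e.\ a fractional orientation of $G$ in which every vertex absorbs at most $\frac{c_1+c_2}{2}$ units. Dually, to certify the conclusion it suffices to find \emph{two} fractional orientations, one of $G[A]$ with every in-weight strictly below $\frac{c_1}{2}$ and one of $G[B]$ with every in-weight strictly below $\frac{c_2}{2}$: if the internal edges of $A$ are oriented this way, then for every $H \subseteq G[A]$ we have $|E(H)| \le \sum_{v \in V(H)} (\text{in-weight of } v) < \frac{c_1}{2}|V(H)|$, hence $\mad(G[A]) < c_1$, and symmetrically for $B$. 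So the conjecture reduces to: partition $V$ into $A \uplus B$ and orient the internal edges of each side so that these two in-weight thresholds hold, with cross edges simply discarded.

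Starting from this reformulation I would first try to extend Algorithm~\ref{alg} to a two-sided greedy: process the vertices in the topological order of the acyclic $G_f$ from Lemma~\ref{acyclic} and commit each vertex to whichever side it can join without breaking that side's threshold, re-routing as in the proof of Lemma~\ref{acyclic} to keep the relevant orientation acyclic. A clean invariant — every committed $A$-vertex has same-side in-weight below $\frac{c_1}{2}$ and every committed $B$-vertex below $\frac{c_2}{2}$ — would close the argument exactly as the degeneracy invariant closes Theorem~\ref{algo}. As a fallback I would set up an induction on $|V(G)|$: since $\mad(G) < c_1+c_2$ forces a vertex $v$ of degree at most $\lceil c_1+c_2\rceil - 1$, delete it, partition the remainder, and re-insert $v$ on the cheaper side, hoping that a careful accounting of its at most $\lceil c_1+c_2\rceil-1$ incident edges can be charged against the two budgets $\frac{c_1}{2}$ and $\frac{c_2}{2}$.

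The hard part — and, I expect, the reason this is stated as a conjecture rather than proved — is that the paper's method is intrinsically \emph{one-sided}: in Theorem~\ref{algo} only $G-S$ receives a $\mad$ bound, read directly off the single global orientation $f$, while $S$ receives merely a \emph{degeneracy} bound, which is integral, local, and exactly what a greedy can guarantee. The special cases $c_1 \in \{1,2\}$ go through only because, as observed in the preliminaries, $0$-degenerate $\equiv \mad < 1$ and $1$-degenerate $\equiv \mad < 2$, so there the degeneracy guarantee on $S$ coincides with the desired $\mad$ bound. For $c_1 \ge 3$ or non-integer $c_1$ this coincidence collapses: $(k-1)$-degeneracy yields only $\mad \le 2(k-1)$, far above the threshold $c_1$, so the greedy surrogate is simply too weak. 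What the conjecture demands is the sharp \emph{global} quantity on both parts at once, and a single-commodity flow $f$ controls only one orientation of the whole graph; I see no evident way to split it into the two independent orientations of $G[A]$ and $G[B]$ that the conclusion needs. I therefore expect the decisive step to be replacing the single-commodity certificate of Lemma~\ref{flow} by a genuinely two-sided (e.g.\ two-commodity, or otherwise jointly-constrained) flow or linear program whose feasibility is forced by $\mad(G) < c_1+c_2$ — and proving that feasibility is precisely the open core of the problem.
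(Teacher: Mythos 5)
This statement is presented in the paper as an open \emph{conjecture}, not a theorem: the paper offers no proof of it, and explicitly says it is an open problem whether $\mad$ is ``partitionable,'' settling only the cases $c_2 \in \{1,2\}$ via Theorem~\ref{mainres}. So there is no proof in the paper to compare your attempt against, and your proposal is not a proof either --- you say so yourself. Your two sketched strategies both stop short of an argument: the two-sided greedy has no stated invariant that survives re-routing (committing a vertex to one side changes which edges are ``internal'' to each side, so the single acyclic orientation $G_f$ does not decompose into valid certificates for $G[A]$ and $G[B]$), and the induction fallback founders exactly where you suspect, since re-inserting a vertex of degree $\lceil c_1+c_2\rceil - 1$ can raise the density of many subgraphs on the chosen side, not just its own in-weight.

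That said, your diagnosis is accurate and matches the paper's own framing. The method of Theorem~\ref{algo} is genuinely one-sided: the global flow $f$ certifies the $\mad$ bound only for $G-S$, while $S$ gets a degeneracy bound from the greedy order, and the two notions coincide with $\mad$ thresholds only at $k=1$ ($0$-degenerate $=$ edgeless $=$ $\mad<1$) and $k=2$ ($1$-degenerate $=$ forest $=$ $\mad<2$), exactly as the preliminaries note. For $k \ge 3$ the paper itself can only conclude $\mad(G[S]) < 2k-2$ rather than $< k$, which is the same loss you identify. Your dual reformulation via two fractional orientations is a reasonable restatement of what would need to be proved (modulo the usual care that $\mad(G) \le c$, not $<c$, is what a fractional orientation with in-weights at most $c/2$ certifies), but as you concede, producing two jointly feasible orientations from one is precisely the open core. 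In short: you have correctly recognized that the statement is unproved and correctly located why the paper's technique does not extend; you have not closed the gap, and neither does the paper.
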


Our results show that this conjecture is true for $c_2 \in \{1, 2\}$.
Moreover since for positive $k$ we have that $k$-degenerate graphs fulfill $\mad(G) < 2k$
we can deduce that for every integer $k \ge 2$ and a graph $G$ that satisfies $\mad(G) < c_1 + k$
there exists a partition of vertex set $V(G) = A \uplus B$ such that
$\mad(G[A]) < c_1$ and $\mad(G[B]) < 2k-2$.

\section*{Acknowledgements}
Majority of this research has been conducted during Structural Graph Theory workshop in Gu\l{}towy, Poland, in June~2019. Thanks to the organizers and to the other workshop participants for creating a productive working atmosphere.
We would also like to thank Bartosz Walczak for discussions on this topic.

% Of course, presented the algorithm takes a polynomial time,
% but before we can run it we have to construct acyclic $G_f$.
% To do so lets recall proof of lemma \ref{acyclic}.
% There is described a procedure which with given flow $f$ in $F$ such that $G_f$ is not acyclic produce another flow $f'$ such that
% $G_{f'}$ has strictly less edges then $G_f$. It just take any cycle from $G_f$ and \textit{rotate} it in $F$,
% it easly can by done in polynomial time. One can compute any max flow $f$, and repeat above procedure until $G_f$ is acyclic.
% Number of times the procedure is repeated is bounded by number of edges in $G$
% thus computing some valid $G_f$ also can be done in polynomial time.

% \begin{theorem}
%   For any graph $G$, and any real numbers $a$, $b$ such that $mad(G) < a + b$
%   there exist two induced subgraphs $G_A$ and $G_B$
%   such that $mad(G_A) < a$, $mad(G_B) < b$ and $V(G_A) \cup V(G_B) = V(G)$.
% \end{theorem}

\bibliography{MAD}{}
\bibliographystyle{plain}

\end{document}